\documentclass[10pt,a4paper]{article}


\usepackage{url}
\usepackage{verbatim}
\usepackage{latexsym}
\usepackage{amssymb,amstext,amsmath,amsthm}
\usepackage{epsf}
\usepackage{epsfig}
\usepackage{a4wide}
\usepackage{verbatim}
\usepackage{proof}
\usepackage{latexsym}
\newtheorem{theorem}{Theorem}[section]

\usepackage{float}
\floatstyle{boxed}
\restylefloat{figure}

\def\oge{\leavevmode\raise
.3ex\hbox{$\scriptscriptstyle\langle\!\langle\,$}}
\def\feg{\leavevmode\raise
.3ex\hbox{$\scriptscriptstyle\,\rangle\!\rangle$}}


%

\newcommand{\UU}{U}
\newcommand{\WW}{\mathsf{W}}
\newcommand{\MM}{\mathsf{M}}
\newcommand{\GU}{{\cal U}}
\newcommand{\GV}{{\cal V}}

\newcommand{\Elem}{\mathsf{Elem}}

\newcommand{\Val}[1]{[#1]}
\newcommand{\id}{1}
\newcommand{\pp}{\mathsf{p}}
\newcommand{\qq}{\mathsf{q}}

\newcommand{\lift}[1]{\overline{#1}}





\def\N0{\hbox{\sf N}_0}

\setlength{\oddsidemargin}{0in} 
\setlength{\textwidth}{6.27in} 
\setlength{\topmargin}{0in} 
\setlength{\headheight}{0in}
\setlength{\headsep}{0in}
\setlength{\textheight}{9.19in} 
\setlength{\footskip}{.8in}



\newcommand{\app}{\mathsf{app}}
\newcommand{\APP}[2]{\mathsf{app}(#1,#2)}

\newcommand{\brec}{\hbox{\sf{brec}}}

\newcommand{\conv}{\mathsf{conv}}
\newcommand{\Term}{\mathsf{Term}}
\newcommand{\Neut}{\mathsf{Neut}}

\newcommand{\Norm}{\mathsf{Norm}}
\newcommand{\Var}{\mathsf{Var}}

\newcommand{\Type}{\mathsf{Type}}



\newcommand{\del}[1]{}

\newcommand{\subst}[1]{{\langle}#1{\rangle}}

\begin{document}

\title{Canonicity and normalization for Dependent Type Theory}

\author{Thierry Coquand}
\date{Computer Science Department, University of Gothenburg}
\maketitle


\section*{Introduction}

 We show canonicity and normalization for dependent type theory 
with a cumulative sequence of universes $\UU_0:\UU_1\dots$ 
{\em with} $\eta$-conversion. We give the argument in a constructive set theory 
CZFu$_{<\omega}$, designed by P. Aczel \cite{Aczel}. 
We provide a purely algebraic presentation of a canonicity proof, as a way
to build new (algebraic) models of type theory. 
We then present a normalization
proof, which is technically more involved, but is based on the same idea.
We believe our argument to be a simplification of existing proofs \cite{ML72,ML73,Abel,Coq}, in the sense
that we never need to introduce a reduction relation, and the proof theoretic strength
of our meta theory is as close as possible to the one of the object theory \cite{Aczel,Rathjen}. 

 Let us expand these two points. If we are only interested in {\em canonicity}, i.e. to prove
that a closed Boolean is convertible to $0$ or $1$, one argument for simple type theory
(as presented e.g. in \cite{Shoenfield}) consists in defining a ``reducibility''\footnote{The terminology
for this notion seems to vary: in \cite{Godel}, where is was first introduced, it is called
``berechenbarkeit'', which can be translated by ``computable'',
in \cite{Tait} it is called ``convertibility'', and in \cite{Shoenfield} it is
called ``reducibility''.} predicate by induction
on the type. For the type of Boolean, it means exactly to be convertible to $0$ or $1$, 
and for function types, it means that it sends a reducible argument to a reducible value. It is
then possible to show by induction on the typing relation that any closed term is reducible.
In particular, if this term is a Boolean, we obtain canonicity. The problem of extending this
argument for a dependent type system with universes is in the definition of what should be
the reducibility predicate for universes. It is natural to try an inductive-recursive definition;
this was essentially the way it was done in \cite{ML72}, which is an early instance of 
an inductive-reductive definition. We define when an element of the universe
is reducible, and, by induction on this proof, what is the associated reducibility predicate
for the type represented by this element.
However, there is a difficulty in this approach: it might 
well be {\em a priori} that an element is both convertible for instance to the type
of Boolean or of a product type, and if this is the case, the previous inductive-recursive definition is ambiguous. 

 In \cite{ML72}, this problem is solved by considering first a 
reduction relation, and then showing this reduction relation to be confluent, and defining convertibility
as having a commun reduct. This does {\em not} work however
when conversion is defined as a {\em judgement} (as in \cite{ML73,Abel}).
This is an essential difficulty, and a relatively subtle and complex argument is involved in 
\cite{Abel,Coq}
to solve it: one defines first an {\em untyped} reduction relation and a reducibility {\em relation}, which is
used first to establish a confluence property.

 The main point of this paper is that this essential difficulty can be solved, in a seemingly magical way, 
by considering {\em proof-relevant} reducibility, that is where reducibility is defined  as a {\em structure} and not only
as a {\em property}. Such an approach is hinted in the reference \cite{ML73}, but \cite{ML73} still
introduces a reduction relation, and also presents a version of type theory with a restricted
form of conversion (no conversion under abstraction, and no $\eta$-conversion;
this restriction is motivated in \cite{ML74}). 

Even for the base type, reducibility is a structure: 
the reducibility structure of
an element $t$ of Boolean type contains either $0$ (if $t$ and $0$ are convertible)
or $1$ (if $t$ and $1$ are convertible) and  this might a priori contains both $0$ and $1$.
Another advantage of our approach, when defining reducibility in a proof-relevant way, is that
the required meta-language is weaker than the one used for a reducibility relation (where one has to do proofs by induction on
this reducibility relation).

 Yet another aspect that was not satisfactory in previous attempts \cite{Abel,Coq} is that it involved
essentially a {\em partial equivalence relation model}. One expects that this would be needed
for a type theory with an extensional equality, but not for the present version of type theory.
This issue disappears here: we only consider {\em predicates}
(that are proof-relevant).

 A more minor contribution of this paper is its {\em algebraic} character. For both
canonicity and decidability of conversion, one considers first a general model construction
and one obtains then the desired result by instantiating this general construction to the
special instance of the initial (term) model, using in both cases only the abstract characteristic
property of the initial model.

\section{Informal presentation}

 We first give an informal presentation of the canonicity proof by first expliciting
the rules of type theory and then explaining the  reducibility argument,

\subsection{Type system}

 We use conversion as judgements \cite{Abel}. Note that it is not clear
a priori that subject reduction holds.

$$
\frac{\Gamma\vdash A:U_n}{\Gamma,x:A\vdash}~~~~~~\frac{}{()\vdash}~~~~~~~
\frac{\Gamma\vdash}{\Gamma\vdash x:A}~(x\!:\! A~in~\Gamma)$$
$$
\frac{\Gamma\vdash A:\UU_n~~~~~~\Gamma,x:A\vdash B:\UU_n}{\Gamma\vdash \Pi (x:A) B:\UU_n}~~~~~~~~~
\frac{\Gamma,x:A\vdash t:B}{\Gamma\vdash \lambda (x:A) t:\Pi (x:A) B}~~~~~~~~
\frac{\Gamma\vdash t:\Pi (x:A) B~~~~~~\Gamma\vdash u:A}
     {\Gamma\vdash t~u:B(u)}
$$
$$
\frac{\Gamma\vdash A:\UU_n}{\Gamma\vdash A:\UU_m}~(n\leqslant m)~~~~~~
\frac{}{\Gamma\vdash \UU_n:\UU_m}~(n<m)~~~~~
\frac{}{\Gamma\vdash N_2:\UU_n}
$$

 The conversion rules are
$$
\frac{\Gamma\vdash t:A~~~~~~\Gamma\vdash A~ \conv~ B:U_n}{\Gamma\vdash t:B}~~~~~~~~~
\frac{\Gamma\vdash t ~\conv~u:A~~~~~~\Gamma\vdash A  ~\conv~ B:U_n}{\Gamma\vdash t ~\conv~u:B}
$$
$$
\frac{\Gamma\vdash t:A}{\Gamma\vdash t ~\conv~t:A}~~~~~~~~~
\frac{\Gamma\vdash t ~\conv~v:A~~~~~~~~~\Gamma\vdash u ~\conv~v:A}{\Gamma\vdash t ~\conv~u:A}
$$
$$
\frac{\Gamma\vdash A ~\conv~B:\UU_n}{\Gamma\vdash A ~\conv~B:U_m}~(n\leqslant m)~~~~~~
\frac{\Gamma\vdash A_0  ~\conv~ A_1:\UU_n~~~~~~~~\Gamma,x:A_0\vdash B_0  ~\conv~ B_1:\UU_n}
     {\Gamma\vdash \Pi (x:A_0) B_0  ~\conv~ \Pi (x:A_1) B_1:\UU_n}
$$
$$
\frac{\Gamma\vdash t ~\conv~t':\Pi (x:A) B~~~~~~\Gamma\vdash u:A}
     {\Gamma\vdash t~u ~\conv~t'~u:B(u)}~~~~~~~~~~~
\frac{\Gamma\vdash t:\Pi (x:A) B~~~~~~\Gamma\vdash u  ~\conv~ u':A}
     {\Gamma\vdash t~u  ~\conv~ t~u':B(u)}
$$
$$
\frac{\Gamma,x:A\vdash t:B~~~~~~~~\Gamma\vdash u:A}{\Gamma\vdash (\lambda (x:A) t)~u  ~\conv~ t(u):B(u)}
$$

 We consider type theory with $\eta$-rules
$$
\frac{\Gamma\vdash t:\Pi (x:A) B~~~~\Gamma\vdash u:\Pi (x:A) B~~~~\Gamma,x:A\vdash t~x  ~\conv~ u~x:B}
     {\Gamma\vdash t  ~\conv~ u:\Pi (x:A) B}$$

 Finally we add $N_2:\UU_1$ with the rules
$$
\frac{}{\Gamma\vdash 0:N_2}~~~~~~~~~~\frac{}{\Gamma\vdash 1:N_2}
~~~~~~~~~
\frac{\Gamma,x:N_2\vdash C:U_n~~~~~\Gamma\vdash a_0:C(0)~~~~~~~\Gamma\vdash a_1:C(1)}
     {\Gamma\vdash \brec~(\lambda x.C)~a_0~a_1:\Pi(x:N_2)C}
$$
with computation rules
${\brec~(\lambda x.C)~a_0~a_1~0  ~\conv~ a_0:C(0)}$
and
${\brec~(\lambda x.C)~a_0~a_1~1  ~\conv~ a_1:C(1)}$.

\subsection{Reducibility proof}

 The informal reducibility proof consists in associating to each closed expression
$a$ of type theory (treating equally types and terms) an abstract object $a'$
which represents a ``proof'' that $a$ is reducible. If $A$ is a (closed) type, then
$A'$ is a family of sets over the set $\Term(A)$ of closed expressions of type $A$
{\em modulo conversion}. If $a$ is
of type $A$ then $a'$ is an element of the set $A'(a)$.

 The metatheory is a (constructive) set theory with a commulative hierarchy of universes
${\cal U}_n$ \cite{Aczel}.

 This is defined by structural induction on the expression as follows
\begin{itemize}
\item $(c~a)'$ is $c'~a~a'$
\item $(\lambda (x:A) b)'$ is the function which takes as arguments
a closed expression $a$ of type $A$  and an element $a'$ in $A'(a)$
and produces $b'(a,a')$
\item $(\Pi (x:A)B)'(w)$ for $w$ closed expression of type $\Pi (x:A)B$
is the set $\Pi (a:\Term(A))(a':A'(a))B'(a,a')(w~a)$
\item $N_2'(t)$ is the set $\{0~|~t~\conv~0\}\cup\{1~|~t~\conv~1\}$
\item $U_n'(A)$ is the set $\Term(A)\rightarrow {\cal U}_n$
\end{itemize}

 It can then be shown\footnote{We prove this statement by induction on the derivation and consider
a more general statement involving a context; we don't provide the details
in this informal part since this will be covered in the next section.}
that if $a:A$ then $a'$ is an element of $A'(a)$
and furthermore that if $a~\conv~b:A$ then $a' = b'$ in $A'(a) = A'(b)$.
In particular, if $a:N_2$ then $a'$ is $0$ or $1$ and we get that $a$
is convertible to $0$ and $1$.

\medskip

 One feature of this argument is that the required 
meta theory, here constructive set theory, is known to be of similar strength as the corresponding
type theory; for a term involving $n$ universes, the meta theory will need $n+1$ universes \cite{Rathjen}.
This is to be contrasted with the arguments in \cite{ML72,Abel,Coq} involving induction recursion
which is a much stronger principle.

\medskip

 We believe that the mathematical purest way to formulate this argument
is an algebraic argument, giving a (generalized) algebraic presentation of type theory.
We then use only of the {\em term} model the fact that it is the {\em initial} model
of type theory. This is what is done in the next section.

\section{Model and syntax of dependent type theory with universes}

\subsection{Cumulative categories with families}

 We present a slight variation (for universes) of the notion of ``category'' with families
\cite{Dybjer}\footnote{As emphasized in this reference, these models should be more exactly
thought of as {\em generalized algebraic structures} rather than {\em categories}; e.g. the initial model 
is defined up to isomorphism and not up to equivalence). This provides a generalized algebraic notion
of model of type theory.}. 
A model is given first by a class of {\em contexts}. If $\Gamma,\Delta$ are two given contexts
we have a set $\Delta\rightarrow\Gamma$ of {\em substitutions} from $\Delta$ to $\Gamma$.
These collections of sets are equipped with operations that
satisfy the laws of composition in a category: we have a substitution $\id$ 
in $\Gamma\rightarrow\Gamma$ and
a composition operator $\sigma\delta$ in $\Theta\rightarrow\Gamma$ if
$\delta$ is in $\Theta\rightarrow\Delta$ and $\sigma$ in $\Delta\rightarrow\Gamma$. Furthermore
we should have $\sigma \id = \id \sigma = \sigma$ and 
$(\sigma\delta)\theta = \sigma(\delta\theta)$ if $\theta:\Theta_1\rightarrow\Theta$.

 We assume to have a ``terminal'' context $()$: for any other context, there is a 
unique substitution, also written $()$, in $\Gamma\rightarrow ()$. In particular
we have $()\sigma = ()$ in $\Delta\rightarrow ()$ if $\sigma$ is in 
$\Delta\rightarrow \Gamma$. 

We write $|\Gamma|$ the set of substitutions $()\rightarrow\Gamma$.

\medskip

If $\Gamma$ is a context we have a cumulative sequence of sets $\Type_n(\Gamma)$
of {\em types over} $\Gamma$ at level $n$ (where $n$ is a natural number).
If $A$ in $\Type_n(\Gamma)$ and $\sigma$ in $\Delta\rightarrow\Gamma$ 
we should have $A\sigma$ in $\Type_n(\Delta)$.
Furthermore $A\id = A$ and $(A\sigma)\delta = A(\sigma\delta)$.
If $A$ in $\Type_n(\Gamma)$ we also have a collection $\Elem(\Gamma,A)$
of {\em elements of type} $A$.
If $a$ in $\Elem(\Gamma,A)$
and $\sigma$ in $\Delta\rightarrow\Gamma$ 
we have $a\sigma$ in $\Elem(\Delta,A\sigma)$. Furthermore
$a\id = a$ and $(a\sigma)\delta = a(\sigma\delta)$.
If $A$ is in $\Type_n()$ we write $|A|$ the set $\Elem((),A)$. 

 We have a {\em context extension operation}: if $A$ is in $\Type_n(\Gamma)$ then we can
form a new context $\Gamma.A$. Furthermore there is a projection
$\pp$ in $\Gamma.A\rightarrow \Gamma$ and a special element
$\qq$ in $\Elem(\Gamma.A,A\pp)$. If $\sigma$ is in $\Delta\rightarrow \Gamma$ and
$A$ in $\Type_n(\Gamma)$ and $a$ in $\Elem(\Delta,A\sigma)$ we have
an extension operation $(\sigma,a)$ in $\Delta\rightarrow \Gamma.A$.
We should have $\pp (\sigma,a) = \sigma$ and $\qq (\sigma,a) = a$ and
$(\sigma,a)\delta = (\sigma\delta,a\delta)$ and $(\pp,\qq) = \id$.

 If $a$ is in $\Elem(\Gamma,A)$ we write $\subst{a}= (\id,a)$ in $\Gamma\rightarrow \Gamma.A$.
Thus if $B$ is in $\Type_n(\Gamma.A)$ and $a$ in $\Elem(\Gamma,A)$ 
we have $B\subst{a}$ in $\Type_n(\Gamma)$.
If furthermore $b$ is in $\Elem(\Gamma.A,B)$ we have $b\subst{a}$ in $\Elem(\Gamma,B\subst{a})$. 

\medskip

 A {\em global} type of level $n$ is given by a an element $C$ in $\Type_n()$.
We write simply $C$ instead of $C()$ in $\Type_n(\Gamma)$ for $()$ in $\Gamma\rightarrow ()$.
Given such a global element $C$, a global element of type $C$ is given by 
an element $c$ in $\Elem((),C)$. We then write similarly
simply $c$ instead of $c()$ in $\Elem(\Gamma,C)$.

Models are sometimes presented by giving a class of special maps (fibrations), where a type
are modelled by a fibration and elements by a section of this fibration. In our case, the fibrations
are the maps $\pp$ in $\Gamma.A\rightarrow \Gamma$, and the sections of these fibrations
correspond exactly to elements in $\Elem(\Gamma,A)$.
Any element $a$ $\Elem(\Gamma,A)$ defines a section $\subst{a} = (\id,a):\Gamma\rightarrow\Gamma.A$
and any such section is of this form.

\subsection{Dependent product types}

 A category with families has {\em product types}
if we furthermore have one operation $\Pi~A~B$ in
$\Type_n(\Gamma)$ for $A$ is in $\Type_n(\Gamma)$ and $B$ is in $\Type_n(\Gamma.A)$.
We should have $(\Pi~A~B)\sigma = \Pi~(A\sigma)~(B\sigma^+)$
where $\sigma^+ = (\sigma\pp,\qq)$.
We have an abstraction operation $\lambda b$ in  $\Elem(\Gamma,\Pi~A~B)$ given
$b$ in $\Elem(\Gamma.A,B)$.
We have an application operation such that $\app(c,a)$ is in $\Elem(\Gamma,B\subst{a})$ if
$a$ is in $\Elem(\Gamma,A)$ and $c$ is in $\Elem(\Gamma,\Pi~A~B)$.
These operations should satisfy the equations
$$
\APP{\lambda b}{a} = b\subst{a}~~~~~~c = \lambda (\app~(c\pp,\qq))~~~~~
(\lambda b)\sigma = \lambda (b\sigma^+)~~~~
\APP{c}{a}\sigma = \APP{c\sigma}{a\sigma}
$$
where we write $\sigma^+ = (\sigma\pp,\qq)$.

\subsection{Cumulative universes}

 We assume to have global elements $U_n$ in $\Type_{n+1}(\Gamma)$
such that $\Type_n(\Gamma) = \Elem(\Gamma,U_n)$.



\subsection{Booleans}

 Finally we add the global constant $N_2$ in $\Type_0(\Gamma)$
and global elements
$0$ and $1$ in $\Elem(\Gamma,N_2)$.
Given $T$ in $\Type_n(\Gamma.N_2)$ and $a_0$ in $\Elem(\Gamma,T\subst{0})$
and $a_1$ in $\Elem(\Gamma,T\subst{1})$ we have an operation
$\brec(T,a_0,a_1)$ producing an element in $\Elem(\Gamma,\Pi~N_2~T)$
satisfying the equations
$\APP{\brec(T,a_0,a_1)}{0} = a_0$ and $\APP{\brec(T,a_0,a_1)}{1} = a_1$.

 Furthermore, $\brec(T,a_0,a_1)\sigma = \brec(T\sigma^+,a_0\sigma,a_1\sigma)$.

\section{Reducibility model}

 Given a model of type theory $\MM$  as defined above, we describe how to build a new
associated ``reducibility'' model $\MM^*$. 
When applied to the initial/term model $\MM_0$, this gives a proof of
canonicity which can be seen as a direct generalization of the argument presented
in \cite{Shoenfield} for G\"odel system T. As explained in the introduction, the main
novelty here is that we consider a proof-relevant notion of reducibility.

 A context of $\MM^*$ is given by a context $\Gamma$ of the model $\MM$ together with
a family of sets $\Gamma'(\rho)$ for $\rho$ in $|\Gamma|$.
A substitution in $\Delta,\Delta'\rightarrow^* \Gamma,\Gamma'$ is given by
a pair $\sigma,\sigma'$ with $\sigma$ in $\Delta\rightarrow\Gamma$ and
$\sigma'$ in $\Pi (\nu\in |\Delta|)\Delta'(\nu)\rightarrow \Gamma'(\sigma\nu)$.

 The identity substitution is the pair $1^* = 1,1'$ with $1'\rho\rho' = \rho'$.

 Composition is defined by $(\sigma,\sigma')(\delta,\delta') = \sigma\delta,(\sigma\delta)'$
with 
$$
(\sigma\delta)'\alpha\alpha' = \sigma'(\delta\alpha)(\delta'\alpha\alpha')
$$

\medskip

 The set $\Type^*_n(\Gamma,\Gamma')$ is defined to be the set of pairs
$A,A'$ where $A$ is in $\Type_n(\Gamma)$ and
$A'\rho\rho'$ is in $|A\rho|\rightarrow\GU_n$. We define then
$A'(\sigma,\sigma')\nu\nu' = A'(\sigma\nu)(\sigma'\nu\nu')$.

 We define $\Elem^*(\Gamma,\Gamma')(A,A')$ to be the set of pairs
$a,a'$ where $a$ is in $\Elem(\Gamma,A)$ and
$a'\rho\rho'$ is in $A'\rho\rho'(a\rho)$ for each $\rho$ in $|\Gamma|$
and $\rho'$ in $\Gamma'(\rho)$.
We define then $(a,a')(\sigma,\sigma') = a\sigma,a'(\sigma,\sigma')$
with $a'(\sigma,\sigma')\nu\nu' = a'(\sigma\nu)(\sigma'\nu\nu')$.

\medskip

 The extension operation is defined by $(\Gamma,\Gamma').(A,A') = \Gamma.A,(\Gamma.A)'$
where $(\Gamma.A)'(\rho,u)$ is the set of pairs $\rho',u'$
with $\rho'\in \Gamma'(\rho)$ and $u'$ in $A' \rho \rho'(u)$.

 We define an element $\pp^* = \pp,\pp'$ in $(\Gamma,\Gamma').(A,A')\rightarrow^* \Gamma,\Gamma'$
by taking $\pp'(\rho,u)(\rho',u') = \rho'$.
We have then an element $\qq,\qq'$ in $\Elem^*((\Gamma,\Gamma').(A,A'),(A,A')\pp^*)$
defined by $\qq'(\rho,u)(\rho',u') = u'$.

\subsection{Dependent product}

 We define a new operation $\Pi^*~(A,A')~(B,B') = \Pi~A~B,(\Pi~A~B)'$ where
$(\Pi~A~B)'\rho\rho'(w)$ is the set
$$
\Pi (u\in |A\rho|)\Pi (u'\in A'\rho\rho'(u))B'(\rho,u)(\rho',u')(\APP{w}{u})
$$

 If $b,b'$ is in $\Elem^*((\Gamma,\Gamma').(A,A'),(B,B'))$ then 
$\lambda^* (b,b') = \lambda b, (\lambda b)'$ where $(\lambda b)'$ is defined by the equation
$$
(\lambda b)' \rho\rho' u u' = b'(\rho,u)(\rho',u')
$$
which is in 
$$
B'(\rho,u)(\rho',u')(\APP{(\lambda b)\rho}{u}) = B'(\rho,u)(\rho',u')(b(\rho,u))
$$

 We have an application operation $\app^*((c,c'),(a,a')) = (\app(c,a),\app(c,a)')$
where $\app(c,a)'\rho\rho' = c'\rho\rho'(a\rho)(a'\rho\rho').$

\subsection{Universes}

 We define $\UU_n'(A)$ for $A$ in $|\UU_n|$ to be the set of functions
$|A|\rightarrow \GU_n$. Thus an element $A'$ of $\UU_n'(A)$ is a family
of sets $A'(u)$ in $\GU_n$ for $u$ in $|A|$. The universe $\UU_n^*$ of $\MM^*$
is defined to be the pair $\UU_n,\UU_n'$ and we have
$\Elem^*((\Gamma,\Gamma'),\UU_n^*) = \Type_n^*(\Gamma,\Gamma')$.

\subsection{Booleans}

 We define $N_2'(u)$ for $u$ in $|N_2|$ to be the set consisting of
$0$ if $u = 0$ and of $1$ if $u = 1$. We have $N_2'$ in $\UU_0'(N_2)$.
Note that $N_2'(u)$ may not be a subsingleton if we have $0 = 1$ in the model.
We define $\brec(T,a_0,a_1)'\rho\rho'u u'$ to be $a_0'\rho\rho'$ if $u' = 0$
and to be $a_1'\rho\rho'$ if $u' = 1$.

\subsection{Main result}

\begin{theorem}
The new collection of context, with the operations $\rightarrow^*,~\Type_n^*,\Elem^*$
and $U_n^*$ and $N_2^*$ define a new model of type theory.
\end{theorem}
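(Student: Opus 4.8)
The plan is to verify, one piece of structure at a time, that $\MM^*$ satisfies every equation required of a cumulative category with families equipped with $\Pi$-types, universes, and Booleans. Write $\sigma^* = (\sigma,\sigma')$, $A^* = (A,A')$ and $a^* = (a,a')$ for the pairs constituting the data of $\MM^*$. The guiding observation is that every object and every required equation of $\MM^*$ splits into a first (base) component living in $\MM$ and a second (primed) component of set-theoretic data over $\MM$; the base components of all equations hold automatically because $\MM$ is assumed to be a model, so the whole content of the proof is the verification of the primed components. Each such verification reduces, after unfolding the definitions given above, to the corresponding law of $\MM$ together with associativity and functoriality of ordinary function application. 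I would organise the argument along the list of operations.

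First I would check that $\rightarrow^*$, $1^*$, and composition form a category with a terminal context. The identity and associativity laws follow by expanding $(\sigma\delta)'\alpha\alpha' = \sigma'(\delta\alpha)(\delta'\alpha\alpha')$ and invoking associativity of composition in $\MM$ inside the argument; the terminal context is $(\,)$ paired with a one-point family, and uniqueness of the substitution into it is immediate from uniqueness in $\MM$. Next I would verify the functor laws $A^*1^* = A^*$ and $A^*(\sigma^*\delta^*) = (A^*\sigma^*)\delta^*$ for $\Type_n^*$, and the analogous laws for $\Elem^*$; unfolding $A'(\sigma,\sigma')\nu\nu' = A'(\sigma\nu)(\sigma'\nu\nu')$ these are direct consequences of the functor laws in $\MM$. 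Then I would treat context extension. Here one must first supply the extension of a substitution, $(\sigma^*,a^*)$ into $(\Gamma,\Gamma').(A,A')$, whose primed part sends $(\nu,\nu')$ to $(\sigma'\nu\nu', a'\nu\nu')$, and then check $\pp^*(\sigma^*,a^*)=\sigma^*$, $\qq^*(\sigma^*,a^*)=a^*$, naturality $(\sigma^*,a^*)\delta^* = (\sigma^*\delta^*,a^*\delta^*)$, and surjective pairing $(\pp^*,\qq^*)=1^*$; each primed equation holds because $(\Gamma.A)'(\rho,u)$ was defined exactly as the set of pairs matching the projections $\pp'$ and $\qq'$.

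For $\Pi$-types I would verify the substitution law $(\Pi^*A^*B^*)\sigma^* = \Pi^*(A^*\sigma^*)(B^*(\sigma^*)^+)$, the computation rule $\app^*(\lambda^* b^*, a^*) = b^*\subst{a^*}$, the $\eta$-rule, and naturality of $\lambda^*$ and $\app^*$; in every case the primed component reduces to the defining equation $(\lambda b)'\rho\rho' u u' = b'(\rho,u)(\rho',u')$ combined with the $\Pi$-laws of $\MM$ and extensionality of set-theoretic functions. For universes I would check the two assertions in the text, $\Elem^*((\Gamma,\Gamma'),\UU_n^*) = \Type_n^*(\Gamma,\Gamma')$ and cumulativity, by comparing the definition of $\UU_n'(A)$ as $|A|\rightarrow\GU_n$ with the primed type data $A'\rho\rho'$ and confirming that the two substitution actions agree, which unfolds to the $\MM$-universe law $\Type_n = \Elem(-,\UU_n)$. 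Finally for Booleans I would verify the two computation rules and the substitution law for $\brec$: the primed part follows since $\brec(T,a_0,a_1)'\rho\rho'uu'$ is defined by the genuine case split on whether the element $u'$ of $N_2'(u)$ is $0$ or $1$, returning $a_0'\rho\rho'$ or $a_1'\rho\rho'$ accordingly, which agrees with the computation rules on the point elements $0$ and $1$.

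The main obstacle I anticipate is not any single deep step but keeping all the primed data well-typed throughout the verification: each primed component must be shown to lie in its prescribed fibre --- $a'\rho\rho'$ in $A'\rho\rho'(a\rho)$, $(\lambda b)'$ in the $\Pi$-set, and so on --- and these membership facts must be maintained in step with the equations, since an equality of elements is only meaningful once both sides are known to inhabit the same set. In particular the $\Pi$-clauses demand that $\lambda^*$ and $\app^*$ produce elements in the correct dependent fibres, and the universe clause demands that the two descriptions of type data coincide compatibly with substitution, not merely pointwise. These coherence checks are exactly where the proof-relevant, \emph{predicate}-only formulation pays off and where the constructive set theory of \cite{Aczel} is used; they are the part I would write out most carefully, the remaining equalities being routine unfoldings.
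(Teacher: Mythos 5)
Your proposal is correct and takes essentially the same approach as the paper: the paper's proof is exactly a direct verification that each required equation holds componentwise, with the base component automatic from $\MM$ and the primed component following by unfolding the definitions (the paper works out only the $\beta$-rule $\app^*(\lambda^*(b,b'),(a,a'))$ as a representative case and notes that only $\beta,\eta$-conversion at the metalevel is needed). Your additional attention to supplying the extended substitution $(\sigma^*,a^*)$ and to the well-typedness of the primed data is a sensible elaboration of what the paper leaves implicit, not a different argument.
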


 The proof consists in checking that the required equalities hold for the operations
we have defined. For instance, we have
$$
\app^*(\lambda^*(b,b'),(a,a')) = (\app(\lambda b,a),\app(\lambda b,a)') = (b(1,a),\app(\lambda b,a)')
$$
and
$$
\app(\lambda b,a)'\rho\rho' = (\lambda b)'\rho\rho'(a\rho)(a'\rho\rho') = b'(\rho,a\rho)(\rho',a'\rho\rho')
$$
and
$$
(b(1,a))'\rho\rho' = b'(\rho,a\rho)(1'\rho\rho',a'\rho\rho') = b'(\rho,a\rho)(\rho',a'\rho\rho')
$$
When checking the equalities, we {\em only use $\beta,\eta$-conversions at the metalevel}.

\medskip

 There are of course strong similarities with the parametricity model presented in \cite{BJ}.
This model can also be seen as a constructive version of the {\em glueing} technique \cite{LS,Shulman}.
Indeed, to give a family of sets over $|\Gamma|$ is essentially the same as to give a set $X$ and a map
$X\rightarrow |\Gamma|$, which is what happens in the glueing technique \cite{LS,Shulman}.

\section{The term model}

 There is a canonical notion of morphism between two models.
For instance,  the first projection $\MM^*\rightarrow \MM$ defines a map of models of type theory. 
As for models of generalized algebraic theories 
\cite{Dybjer}, there is an {\em initial} model unique up to isomorphism.
We define the {\em term} model $\MM_0$ of type theory to be this initial model.
As for equational theories, this model can be presented by first-order terms
(corresponding to each operations) modulo the equations/conversions that have to
hold in any model.
 
\begin{theorem}
In the initial model given $u$ in $|N_2|$ we have $u = 0$ or $u = 1$. Furthermore
we don't have $0 = 1$ in the initial model.
\end{theorem}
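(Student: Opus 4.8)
The plan is to instantiate the reducibility construction of the previous section at the initial model itself and then let initiality do the work. First I would take $\MM = \MM_0$ and form $\MM_0^*$; by the preceding theorem this is again a model of type theory, and the first projection $\pi\colon\MM_0^*\to\MM_0$ (sending $(\Gamma,\Gamma')\mapsto\Gamma$, and acting likewise on substitutions, types and elements) is a morphism of models. Since $\MM_0$ is initial there is a unique morphism $s\colon\MM_0\to\MM_0^*$. Now $\pi\circ s$ and the identity are both morphisms $\MM_0\to\MM_0$, so by the uniqueness clause of initiality $\pi\circ s$ equals the identity morphism on $\MM_0$. Thus $s$ is a section of $\pi$: it equips every context, type and element of $\MM_0$ with reducibility data lying exactly over it.

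Next I would unwind this section at the empty context. A morphism preserves the terminal context and the designated global constants, so $s(())$ is the terminal context $((),()')$ of $\MM_0^*$, where $()'$ is the singleton family, and $s$ carries the global Boolean type to $N_2^*=(N_2,N_2')$. Given $u\in|N_2|=\Elem((),N_2)$, the element $s(u)$ of $\Elem^*$ is a pair whose first component, because $\pi\circ s$ is the identity, is again $u$; write $s(u)=(u,u')$. Evaluating $u'$ at the unique element of $|()|$ and the unique element of the singleton family $()'$ yields an element of the set $N_2'(u)$. By the definition of $N_2'$, an element of $N_2'(u)$ is either the token $0$ together with a proof that $u=0$, or the token $1$ together with a proof that $u=1$; hence $u=0$ or $u=1$, and this disjunction is constructive.

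For the last assertion the reducibility model is of no use, since $N_2'(u)$ is deliberately proof-relevant and, as noted, need not be a subsingleton when $0=1$. Instead I would exhibit one model in which $0$ and $1$ are distinct and again invoke initiality. Take the standard set-theoretic model $\mathcal S$ built inside the metatheory: contexts are sets, substitutions are functions, $\Type_n(\Gamma)$ is the set of families $\Gamma\to\GU_n$, elements are dependent functions, $\Pi$ and the universes are interpreted by the corresponding set-theoretic operations, and $N_2$ is interpreted by a genuine two-element set with distinct elements $0$ and $1$. By initiality there is a morphism $\MM_0\to\mathcal S$ sending the constants $0$ and $1$ to those two distinct elements. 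A morphism acts by functions and so preserves equality of elements; hence if we had $0=1$ in $\MM_0$ their images would coincide, contradicting $0\neq 1$ in $\mathcal S$.

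The routine part is checking that $\mathcal S$ really is a model; this uses the cumulative hierarchy $\GU_n$ of the metatheory but is entirely standard. The step that needs the most care is the first one: making the notion of morphism of models precise enough that it preserves the terminal context and the designated constants $N_2,0,1$ on the nose, and verifying that $\pi$ is such a morphism. Once this bookkeeping is in place, the uniqueness clause of initiality gives $\pi\circ s = \mathrm{id}$ for free, and the extraction of the disjunction from $u'\in N_2'(u)$ is immediate. I therefore expect the main obstacle to be exactly this organization of the algebraic structure rather than any genuine computation, since by the preceding theorem all the relevant conversions are already validated in $\MM_0^*$.
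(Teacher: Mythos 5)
Your argument for the disjunction $u=0$ or $u=1$ is exactly the paper's: form $\MM_0^*$, get the unique map $s\colon\MM_0\to\MM_0^*$, use initiality to see that the projection composed with $s$ is the identity, and read off $u'\in N_2'(u)$. The divergence is in the second claim. You assert that the reducibility model is ``of no use'' for $0\neq 1$ because $N_2'(u)$ need not be a subsingleton when $0=1$, and you instead build a standard set-theoretic model and map into it by initiality. That alternative is correct (and entirely standard), but it costs you the verification that the set model really is a model, which the paper never needs. The paper's own argument stays inside $\MM_0^*$: the section $s$ must send the designated constants $0$ and $1$ to the designated constants $(0,0')$ and $(1,1')$ of $\MM_0^*$, where $0'$ and $1'$ are the \emph{distinct} meta-level tokens $0$ and $1$; if $0=1$ held in $\MM_0$, functionality of $s$ would force $0'=1'$ in the set $N_2'(0)=N_2'(1)$, which is false since that set keeps the two tokens apart. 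So the proof-relevance you cite as the obstruction is precisely what makes the reducibility model sufficient: the remark that $N_2'(u)$ may fail to be a subsingleton is not a defect to route around but the feature that lets one distinguish $0'$ from $1'$ even when the model identifies $0$ with $1$. Both routes are sound; the paper's is the more economical, yours is the more familiar consistency-by-nontrivial-model argument.
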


\begin{proof}
 We have a unique
map of models $\MM_0\rightarrow \MM_0^*$. The composition of the first projection 
with this map has to be the identity function on $\MM_0$.
If $u$ is in $|N_2|$ the image of $u$ by the initial map has hence to be a pair of the
form $u,u'$ with $u'$ in $N_2'(u)$. It follows that we have $u = 0$ if $u' = 0$
and $u = 1$ if $u' = 1$. Since $0' = 0$ and $1' = 1$ we cannot have $0 = 1$ in 
the initial model $\MM_0$.
\end{proof}

\section{Presheaf model}

 We suppose given an arbitrary model $\MM$. We define from this the following
category ${\cal C}$ of ``telescopes''. An object of ${\cal C}$ is a list
$A_1,\dots,A_n$ with $A_1$ in $\Type()$, $A_2$ in $\Type(A_1)$,
$A_3$ in $\Type(A_1.A_2)$ $\dots$ To any such object $X$ we can associate a context
$i(X) = A_1.\dots.A_n$ of the model $\MM$. If $A$ is in $\Type(i(X))$, we define
the set $\Var(X,A)$ of numbers $v_k$ such that $\qq\pp^{n-k}$ is in $\Elem(i(X),A)$.
We may write simply $\Elem(X,A)$ instead of $\Elem(i(X),A)$. Similarly
we may write $\Type_n(X) = \Elem(X,U_n)$ for $\Type_n(i(X))$.
 If $v_k$ is in
$\Var(X,A)$ we write $[v_k] = \qq\pp^{n-k}$. If $Y = B_1,\dots,B_m$ is an
object of ${\cal C}$, a map $\sigma:Y\rightarrow X$ is given by a list
$u_1,\dots,u_n$ such that $u_p$ is in $\Var(Y,A_p([u_1],\dots,[u_{p-1}]))$.
We then define $[\sigma] = ([u_1],\dots,[u_p]):i(Y)\rightarrow i(X)$.
It is direct to define a composition operation such that $[\sigma\delta] = [\sigma][\delta]$
which gives a category structure on these objects.

 We use freely that we can interpret the language of dependent types (with universes)
in any presheaf category \cite{Hofmann1}.
A presheaf $F$ is given by a family of sets $F(X)$
indexed by contexts with restriction maps $F(X)\rightarrow F(Y),~u\mapsto u\sigma$
if $\sigma:Y\rightarrow X$, satisfying the equations
$u1 = u$ and $(u\sigma)\delta = u(\sigma\delta)$ if $\delta:Z\rightarrow Y$.
A dependent presheaf $G$ over $F$ is a presheaf over
the category of elements of $F$, so it is given by a family of sets $G(X,\rho)$
for $\rho$ in $F(X)$ with restriction maps. 

  We write $\GV_0,\GV_1,\dots$ the cumulative sequence
of presheaf universes, so that $\GV_n(X)$ is the set of $\GU_n$-valued
dependent presheaves on the presheaf represented by $X$.

 $\Type_n$ defines a presheaf over this category, with $\Type_n$ subpresheaf
of $\Type_{n+1}$. We can see $\Elem$ as a dependent presheaf over $\Type_n$
since it determines a collection of sets $\Elem(X,A)$ for $A$ in $\Type_n(X)$
with restriction maps.
 

 If $A$ is in $\Type_n(X)$ we let $\Norm(X,A)$ (resp. $\Neut(X,A)$)
be the set of all expressions of type $A$ that are in normal
form (resp. neutral). As for $\Elem$, we can see $\Neut$ and $\Norm$
as dependent types over $\Type_n$, and we have 
$$\Var(A)\subseteq \Neut(A)\subseteq\Norm(A)$$
 We have an evaluation function $[e]:\Elem(A)$ if $e:\Norm(A)$.
If  $a$ is in $\Elem(A)$ then we let $\Norm(A)|a$ (resp. $\Neut(A)|a$) be
the subtypes of $\Norm(A)$ (resp. $\Neut(A)$) of elements $e$ such that $[e] = a$.

\medskip

 Each context $\Gamma$ defines a presheaf $|\Gamma|$ by letting $|\Gamma|(X)$ be
the set of all substitutions $i(X)\rightarrow\Gamma$. 




 Any element $A$ of $\Type_n(\Gamma)$ defines internally a function
$|\Gamma|\rightarrow\Type_n,~\rho\mapsto A\rho$.

We have a canonical isomorphism between $\Var(A)\rightarrow\Type_n$ and $\Elem(A\rightarrow U_n)$.
We can then use this isomorphism to build an operation 
$$
\pi : \Pi (A:\Type_n)(\Var(A)\rightarrow\Type_n)\rightarrow \Type_n
$$
such that $(\Pi~A~B)\rho = \pi (A\rho) ((\lambda x:\Var(A\rho))B(\rho,[x]))$.

 We can also define, given $A:\Type_n$ and $F:\Var(A)\rightarrow\Type_n$
an operation $\Lambda A f:\Elem(\pi A F)$, for $f:\Pi (x:\Var(A))\Elem(F~x)$.

 Similarly, we can define an operation
$$
\pi : \Pi (A:\Norm(\UU_n))(\Var([A])\rightarrow\Norm(\UU_n))\rightarrow \Norm(\UU_n)
$$
such that $[\pi A F] = \pi [A] (\lambda (x:\Var([A]))[F~x])$ and 
given $A:\Norm(\UU_n)$ and $F:\Var([A])\rightarrow\Type_n$
and $f:\Pi (x:\Var([A]))\Elem(F~x)$
an operation $\Lambda A f:\Norm(\pi [A] F)$ such that
$[\Lambda A f] = \Lambda [A] (\lambda (x:\Var([A])[f~x]))$.

 While equality might not be decidable in $\Var(A)$ (because we use arbitrary renaming
as maps in the base category), the product operation is injective: if
$\pi A F = \pi B G$ in $\Norm(\UU_n)$ then $A = B$ in $\Norm(\UU_n)$ and
$F = G$ in $\Var([A])\rightarrow\Type_n$.

\section{Normalization model}

 The model is similar to the reducibility model and we only explain the main operations.

 As before, a context is a pair $\Gamma,\Gamma'$ where $\Gamma$ is a context of $\MM$
and $\Gamma'$ is a dependent family over $|\Gamma|$.

\medskip

 A type at level $n$ over this context 
consists now of a pair $A,\lift{A}$ where 
$A$ is in $\Type_n(\Gamma)$ and 
$\lift{A}\rho\rho'$ in $U_n'(A\rho)$
for $\rho$ in $|\Gamma|$ and $\rho'$ in $\Gamma'(\rho)$.
An element of $U_n'(T)$ for $T$ in $\Type_n$ consists in 
a 4-uple $T',T_0,\alpha,\beta$
where the element $T_0$ is in $\Norm(U_n)|T$,
the element $T'$ is in $\Elem(T)\rightarrow \GV_n$,
the element 
$\beta$ is in $\Pi (k :  \Neut(T))T'(\Val{k})$
and $\alpha$ is in $\Pi (u : \Elem(T))~T'(u)\rightarrow \Norm(T)|u$.

\medskip

 An element of this type is a pair $a,\lift{a}$ where $a$ is in $\Elem(\Gamma,A)$
and $\lift{a}\rho\rho'$ is an element of $T'(a\rho)$ where
$(T',T_0,\alpha,\beta) = \lift{A}\rho\rho'$.

\medskip

 The intuition behind this definition is that it is a ``proof-relevant'' way
to express the method of reducibility used for proving normalization \cite{FLD}: a 
reducibility predicate has to contain all neutral terms and only normalizable terms.
The function $\alpha$ (resp. $\beta$) is closely connected to the ``reify'' (resp. ``reflect'') 
function used in normalization by evaluation \cite{BS}, but for a ``glued'' model.

\medskip

 We redefine ${N_2}'(t)$ to be the set of elements
$u$ in $\Norm(N_2)|t$
such that $u$ is $0$ or $1$ or is neutral.
We define $\alpha_{N_2} t \nu = \nu$ and $\beta_{N_2}(k) = k$.

\medskip

 We define $\alpha_{\UU_n}~T~(T',T_0,\alpha_T,\beta_T) = T_0$ 
and for $K$ neutral $\beta_{U_n}(K) = (K',K,\alpha,\beta)$ where $K'(t)$ is
$\Neut(\Val{K})|t$ and $\alpha t k = k$ and $\beta(k) = k$.

\medskip

 The set $\Type^*_n(\Gamma,\Gamma')$ is defined to be the set of pairs
$A,\lift{A}$ where $A$ is in $\Type_n(\Gamma)$ and
$\lift{A}\rho\rho'$ is in $U'_n(A\rho)$.

\medskip

 The extension operation is defined by $(\Gamma,\Gamma').(A,\lift{A}) = \Gamma.A,(\Gamma.A)'$
where $(\Gamma.A)'(\rho,u)$ is the set of pairs $\rho',\nu$
with $\rho'\in \Gamma'(\rho)$ and $\nu$ in $\lift{A}\rho \rho'.1(u)$.

\medskip

 We define a new operation $\Pi^*~(A,\lift{A})~(B,\lift{B}) = C,\lift{C}$ where $C = \Pi~A~B$
and $\lift{C}\rho\rho'$ is the tuple
\begin{itemize}
\item $C'(w) = \Pi (a : \Elem(A\rho)) \Pi (\nu :  T'(u))F' u \nu(\APP{w}{u})$
\item $\beta(k) u \nu = \beta_F u \nu (\APP{k}{\alpha_T u \nu})$
\item $\alpha~ w~\xi =
         \Lambda T_0 g$ with $g(x) = \alpha_F \Val{x} \beta_T(x) (\APP{w}{\Val{x}}) (\xi \Val{x} \beta_T(x))$
\item $C_0 = \pi T_0 G$ with $G(x)  = F_0 \Val{x} \beta_T(x)$
\end{itemize}
where we write $(T',T_0,\alpha_T,\beta_T) = \lift{A}\rho\rho'$ in $U_n'(A\rho)$
and for each $u$ in $\Elem(A\rho)$ and $\nu$ in $T'(u)$ we write
$(F'u \nu,F_0 u \nu,\alpha_F u \nu,\beta_F u \nu) = \lift{B}(\rho,u)(\rho',\nu)$
in  $U_n'(B(\rho,u))$. We can check $[C_0] = (\Pi~A~B)\rho$
and we have $C',C_0,\alpha,\beta$ is an element in $U_n'((\Pi~A~B)\rho).$

\medskip

 We define $\lift{U_n} = U_n,{U_n}',\alpha_{U_n},\beta_{U_n}$ and 
$\lift{N_2} = N_2,{N_2}',\alpha_{N_2},\beta_{N_2}$.

\medskip

 If we have $T$ in $\Type_n(\Gamma.N_2)$ and $a_0$ in $\Elem(T\subst{0})$ and $a_1$ in $\Elem(T\subst{1})$
and for each $\rho:|\Gamma|$ and $\rho':\Gamma'(\rho)$ and
$u$ in $\Elem(N_2)$ and $\nu$ in $N_2'(u)$ an element
$(T'u \nu,T_0 u \nu,\alpha_T u \nu,\beta_T u \nu)$
in  $U_n'(T(\rho,u))$ and $\lift{a_0}$ in $T'00({a_0})$ and
$\lift{a_1}$ in $T'11{a_1}$ we define $f = \lift{\brec(T,a_0,a_1)}\rho\rho'$ as follows.
We take $f~u~\nu = \lift{a_0}$ if $\nu = 0$ and
$f~u~\nu = \lift{a_1}$ if $\nu = 1$ and finally
$f~u~\nu = 
 \beta_T u \nu (\brec(\Lambda(N_2,g),\alpha_T 0 0 {a_0} \lift{a_0},
                \alpha_T 1 1 {a_1} \lift{a_1}))(\nu))$
where $g(x) = T_0\Val{x}\beta_{N_2}(x)$ if $\nu$ is neutral.

\medskip

 We thus get, starting from an arbitrary model $\MM$, a new model $\MM^*$ with a projection
map $\MM^*\rightarrow \MM$. As for the canonicity model, if we start from the initial model $\MM_0$
we have an initial map $\MM_0\rightarrow\MM_0^*$ which is a section of the projection
map. Hence for any $a$ in $\Elem(A)$ we can compute $\lift{a}$ in $A'(a)$
where $(A',A_0,\alpha_A,\beta_A) = \lift{A}$ and we have $\alpha_A~a~\lift{a}$
in $\Norm(A)|a$.

\begin{theorem}
Equality in $\MM_0$ is decidable.
\end{theorem}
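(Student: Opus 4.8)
The plan is to extract from the section $\MM_0\to\MM_0^*$ a normalization function and to show that it supplies decidable canonical representatives for the equality of $\MM_0$. First I would record what the section gives. Composing the initial map $\MM_0\to\MM_0^*$ with the first projection $\MM_0^*\to\MM_0$ yields an endomorphism of the initial model, which by initiality must be the identity; hence the initial map is a \emph{section} and sends each $a$ in $\Elem(A)$ to a pair $(a,\lift{a})$ with $\lift{a}$ in $A'(a)$, where $(A',A_0,\alpha_A,\beta_A)=\lift{A}$. Since the initial map is a morphism of models, $\lift{a}$ is uniquely and functionally determined by $a$, so setting $\mathrm{nf}(a)=\alpha_A~a~\lift{a}$ defines a genuine function $\Elem(A)\to\Norm(A)$, and by construction $\mathrm{nf}(a)$ lies in $\Norm(A)|a$, that is $[\mathrm{nf}(a)]=a$. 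Equality of types is the special case $A=\UU_n$ and equality of substitutions reduces to their components, so it suffices to treat elements.

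Next I would deduce the equivalence $a=b\iff \mathrm{nf}(a)=\mathrm{nf}(b)$. Elements of $\Elem(A)$ in $\MM_0$ are conversion classes, so if $a=b$ then $\lift{a}=\lift{b}$ by functionality and hence $\mathrm{nf}(a)=\mathrm{nf}(b)$; conversely, if $\mathrm{nf}(a)=\mathrm{nf}(b)$ then $a=[\mathrm{nf}(a)]=[\mathrm{nf}(b)]=b$ using $[\mathrm{nf}(a)]=a$, so $\mathrm{nf}$ is injective. Thus two elements of $\MM_0$ are equal exactly when their normal forms coincide, and since $\mathrm{nf}$ is computed by structural recursion through the initial map it is effective; it remains only to decide syntactic equality on $\Norm(A)$.

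The heart of the argument, and the step I expect to be the main obstacle, is the decidability of equality of normal forms, which I would prove by a simultaneous induction on the structure of $\Norm(A)$ and $\Neut(A)$. When the heads are constructors the comparison reduces to subterms: two abstractions agree iff their bodies do, and for type constructors the product case is settled by the injectivity of $\pi$ stated earlier, namely that $\pi~A~F=\pi~B~G$ forces $A=B$ and $F=G$, so $\Pi$-normal forms are compared component-wise. The delicate point is the neutral case: a neutral term is a variable head applied to a spine of normal forms, and while equality in $\Var(A)$ need not be decidable in general because the base category ${\cal C}$ admits arbitrary renamings, within a fixed telescope the variable heads are concrete de Bruijn indices with decidable equality, so comparing two neutrals reduces to comparing their heads and, by the induction hypothesis, their spines. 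Assembling these cases yields a decision procedure for equality on each $\Norm(A)$, and composing it with $\mathrm{nf}$ gives the desired decision procedure for equality in $\MM_0$.
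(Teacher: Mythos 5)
Your proposal follows the paper's proof essentially verbatim: use the section $\MM_0\to\MM_0^*$ of the projection to compute $\lift{a}$, reify via $\alpha$ to get a normal form over $a$, conclude $a=b$ iff the normal forms agree from $u=[\alpha\,u\,\lift{u}]$, and finish by deciding syntactic equality of normal forms. The only difference is that the paper simply asserts decidability of equality in $\Norm((),A)$ (working over the empty telescope, so the worry about $\Var(A)$ does not arise), whereas you sketch the structural induction explicitly; that elaboration is consistent with the paper's remarks on the injectivity of $\pi$ and is a reasonable filling-in rather than a departure.
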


\begin{proof}
If $a$ and $b$ are of type $A$
we can compute $\lift{A} = (A',A_0,\alpha,\beta)$. We then have $a = b$ in $\Elem(A)$
if, and only if, $\alpha a \lift{a} = \alpha b \lift{b}$ in $\Norm(A)$
since $u = [\alpha u \lift{u}]$ for any $u$ in $\Elem(A)$. The result then
follows from the fact that the equality in $\Norm((),A)$ is decidable.
\end{proof}

 We also can prove that $\Pi$ is one-to-one for conversions, following P. Hancock's argument
presented in \cite{ML73}.


\section{Conclusion}

 Our argument extends directly to the addition of dependent sum types with surjective
pairing, or inductive types such as the type $\WW~A~B$ \cite{ML79}.

 The proof is very similar to the argument presented in \cite{ML73}, but it covers
conversion under abstraction and $\eta$-conversion. Instead of set theory, one could formalize
the argument in extensional type theory; presheaf models have been already represented elegantly
in NuPrl \cite{Bickford}.
As we  noticed however, the meta theory only uses the form of extensionality ($\eta$-conversion)
also used in the object theory, and we should be able to express the normalization proof as
a program transformation from one type theory to another. The formulation of the presheaf
model as a(n extension of) type theory will be similar to the way cubical type theory \cite{CCHM} expresses syntactically
a presheaf model over a base category which is a Lawvere theory. This should amount essentially
to work in a type theory with a double context, where substitutions for the first context are
restricted to be renamings. We leave this as future work,
which, if successful, would refute some arguments in \cite{ML74} for not accepting $\eta$-conversion 
as definitional equality.

\section*{Acknowledgement}

 This work started as a reading group of the paper \cite{Shulman} together with Simon Huber and
Christian Sattler. The discussions we had were essential for this work; in particular Christian
Sattler pointed out to me the reference \cite{AHS}


\begin{thebibliography}{9}

\bibitem{Abel}
A. Abel and G. Scherer.
\newblock{On Irrelevance and Algorithmic Equality in Predicative Type Theory.}
\newblock{In Logical Methods in Computer Science, 8(1):1-36, 2012.}

\bibitem{Aczel}
P. Aczel.
\newblock{On Relating Type Theories and Set Theories.}
\newblock{{\em Types for proofs and programs}, 1–18, Lecture Notes in Comput. Sci., 1657, 1999.}


\bibitem{AHS}
Th. Altenkirch, M. Hofmann and Th. Streicher.
\newblock{Reduction-free normalisation for system F.}
\newblock{Unpublished note, 1997.}

\bibitem{BJ}
J.-Ph. Bernardy, P. Jansson, R. Paterson.
\newblock{Parametricity and dependent types.}
\newblock{ ICFP 2010: 345-356.}

\bibitem{BS}
U. Berger and H. Schwichtenberg.
\newblock{An inverse of the evaluation functional for typed lambda-calculus.}
\newblock{Proceedings of LICS 1991.}

\bibitem{Bickford}
M. Bickford.
\newblock{Formalizing Category Theory and Presheaf Models of Type Theory in Nuprl.}
\newblock{Preprint, https://arxiv.org/abs/1806.06114, 2018.}


\bibitem{Coq}
Th. Coquand.
\newblock{An algorithm for testing conversion in type theory.}
\newblock{In {\em Logical frameworks}, p. 255-279, Cambridge University Press, 1991.}

\bibitem{CCHM}
C. Cohen, Th. Coquand, S. Hber, A. M\"ortberg.
\newblock{Cubical type theory: a constructive interpretation of the univalence axiom.}
\newblock{Proceeding of the Types Conference, 2015.}

\bibitem{Rathjen}
L. Crosilla and ML. Rathjen
\newblock{Inaccessible set axioms may have little consistency strength.}
\newblock{Ann. Pure Appl. Log. 115, 33–70 (2002).}

\bibitem{Dybjer}
P. Dybjer.
\newblock{Internal Type Theory.}
\newblock{in {\em Types for Programs and Proofs}, Springer, 1996.}


\bibitem{FLD}
S. Fortune, D. Leivant, M. O'Donnell.
\newblock{The Expressiveness of Simple and Second-Order Type Structures.}
\newblock{Journal of the ACM, Volume 30 Issue 1, p. 151-185, 1983.}

\bibitem{Godel}
K. G\"odel.
\newblock{\"Uber eine bisher noch nicht ben\"utzte Erweiterung des finiten Standpunktes.}
\newblock{Dialectica, 12, pp. 280-287, 1958.}

\bibitem{Hofmann1}
M. Hofmann.
\newblock{Syntax and semantics of dependent type theory.}
\newblock{In {\em Semantics of Logic of Computation}, Cambridge University Press, 1997.}


\bibitem{LS}
J. Lambek and P.J. Scott.
\newblock{{\it Introduction to higher order categorical logic.}}
\newblock{Cambridge studies in advanced mathematics 7, 1986.}


\bibitem{ML72}
P. Martin-L\"of.
\newblock{An intuitionistic theory of types.}
\newblock{Preliminary version 1972; published in {\em 25 Years of Type Theory}, 1995.}

\bibitem{ML73}
P. Martin-L\"of.
\newblock{An intuitionistic theory of types: predicative part.}
\newblock{Logic Colloquium '73 (Bristol, 1973), pp. 73–118.}

\bibitem{ML74}
P. Martin-L\"of.
\newblock{About Models for Intuitionistic Type Theories and the Notion of Definitional Equality.}
\newblock{Proceedings of the Third Scandinavian Logic Symposium, 1975, Pages 81-109.}

\bibitem{ML79}
P. Martin-L\"of.
\newblock{Constructive mathematics and computer programming.}
\newblock{Logic, methodology and philosophy of science, VI (Hannover, 1979),  pp. 153--175,
 Stud. Logic Found. Math., 104, North-Holland, Amsterdam, 1982.}

\bibitem{Shoenfield}
J.R. Shoenfield.
\newblock{{\em Mathematical Logic.}}
\newblock{Addison-Wesley, 1967.}

\bibitem{Shulman}
M. Shulman.
\newblock{Univalence for inverse diagrams and homotopy canonicity.}
\newblock{{\em Mathematical Structures in Computer Science}, 25:05, p. 1203–1277, 2014.}

\bibitem{Tait}
W.W. Tait.
\newblock{Intensional interpretations of functionals of finite type, part I.}
\newblock{Journal of Symbolic Logic, 32, pp. 198-212, 1967.}



\end{thebibliography}
\end{document}